\newtheorem{thm}{Theorem}[section]
\newtheorem{lem}{Lemma}[section]
\newtheorem{cory}{Corollary}[section]
\begin{document}

\begin{frontmatter}



\title{Hamiltonian Paths in Two Classes of Grid Graphs}

\address[label1]{Corresponding author: fatemeh.keshavarz.2003@gmail.com}
\address[label2]{ ar\_bagheri@aut.ac.ir}

\author{Fatemeh Keshavarz-Kohjerdi \fnref{label1}}
\address{Department of Computer Engineering,\\ Islamic Azad University,
    North Tehran Branch, Tehran, Iran.\\}

\author{Alireza Bagheri \fnref{label2}}
\address{Department of Computer Engineering \& IT,\\ Amirkabir University of Technology, Tehran, Iran.}
\begin{abstract}
In this paper, we give the necessary and sufficient conditions for
the existence of Hamiltonian paths in $L-$alphabet and $C-$alphabet
grid graphs. We also present a linear-time algorithm for finding
Hamiltonian paths in these graphs.

\end{abstract}

\begin{keyword}
Hamiltonian path\sep Hamiltonian cycle\sep Grid graph \sep Alphabet
grid graph \sep  Rectangular grid graph.

AMS subject classification: $05C45$

\end{keyword}

\end{frontmatter}



\section{Introduction}\label{IntroSect}
A Hamiltonian path in a graph $G(V,E)$ is a simple path that
includes every vertex in $V$. The problem of deciding whether a
given graph $G$ has a Hamiltonian path is a well-known NP-complete
problem \cite{D:GT, GJ:CAI}. Rectangular grid graphs first appeared
in \cite{LM:HPOARC}, where Luccio and Mugnia tried to solve the
Hamiltonian path problem. The Hamiltonian path problem was studied
for grid graphs in \cite{IPS:HPIGG}, where the authors gave the
necessary and sufficient conditions for the existence of Hamiltonian
paths in rectangular grid graphs and proved that the problem for
general grid graphs is NP-complete. Also, the authors in
\cite{CT:HPOGG} presented sufficient conditions for a grid graph to
be Hamiltonian and proved that all finite grid graphs of positive
width have Hamiltonian line graphs.  Chen \textit{et al.}
\cite{CST:AFAFCHPIM} improved the algorithm of \cite{IPS:HPIGG} and
presented a parallel algorithm for the problem in mesh architecture.
In \cite{AER:MoSAG, AEB:SSAG}, Salman \textit{et al.} determined the
classes of alphabet graphs which contain Hamilton cycles. In this
paper, we obtain the necessary and sufficient conditions for a
$L-$alphabet and $C-$alphabet graphs to have Hamiltonian paths.
Also, we present a linear-time algorithm for finding Hamiltonian
paths in these graphs.

\section{Preliminaries}
In this section, we present some definitions and previously
established results on the Hamiltonian path problem
in grid graphs which appeared in \cite{CST:AFAFCHPIM, IPS:HPIGG, AEB:SSAG, AER:MoSAG}.\\
The \textit{two-dimensional integer grid} $G^\infty$ is an infinite
graph with the vertex set of all the points of the Euclidean plane
with integer coordinates. In this graph, there is an edge between
any two vertices of unit distance. For a vertex $v$ of this graph,
let $v_{x}$ and $v_{y}$ denote $x$ and $y$ coordinates of its
corresponding point. We color the vertices of the two-dimensional
integer grid by black and white colors. A vertex $\upsilon$ is
colored \textit{white} if $\upsilon_{x}+\upsilon_{y}$ is even, and
is colored \textit{black} otherwise. A \textit{grid graph} $G_{g}$
is a finite vertex-induced subgraph of the two-dimensional integer
grid. In a grid graph $G_{g}$, each vertex has degree at most four.
Clearly, there is no edge between any two vertices of the same
color. Therefore, $G_{g}$ is a bipartite graph. Note that any cycle
or path in a bipartite graph alternates between black and white
vertices. A \textit{rectangular grid graph} $R(m,n)$ (or $R$ for
short) is a grid graph whose vertex set is $V(R)= \{\upsilon \ |\ 1
\leq \upsilon_{x}\leq m, \ 1\leq \upsilon_{y}\leq n\}$. In the
figures, we assume that $(1,1)$ is the coordinates of the vertex in
the lower left corner. The size of $R(m,n)$ is defined to be
$m\times n$. $R(m,n)$ is called \textit{odd-sized} if $m\times n$ is
odd, and is called \textit{even-sized} otherwise. $R(m,n)$ is called a \textit{k-rectangle} if $n=k$.\\
The following lemma states a result about the Hamiltonicity of
even-sized rectangular graphs.
\begin{lem}
\label{Lemma:1} $[1]$ $R(m,n)$ has a Hamiltonian cycle if and only
if it is even-sized and $m,n>1$.
\end{lem}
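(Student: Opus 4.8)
The plan is to prove the two implications separately, with the necessity direction resting on the bipartite structure already set up above and the sufficiency direction on an explicit construction.

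For necessity, suppose $R(m,n)$ has a Hamiltonian cycle $C$. Since $R(m,n)$ is bipartite under the black/white coloring determined by the parity of $\upsilon_x+\upsilon_y$, the cycle $C$ must alternate colors and hence has even length; as $C$ visits all $mn$ vertices, this forces $mn$ to be even, i.e.\ $R(m,n)$ is even-sized. Equivalently, a Hamiltonian cycle uses equally many black and white vertices, so the two color classes must have the same size, again giving $mn$ even. Finally, if $m=1$ or $n=1$ then $R(m,n)$ is a single row or column, which is a simple path containing no cycle whatsoever; hence the existence of a Hamiltonian cycle forces $m,n>1$. This settles necessity with only the parity count and the degenerate-strip remark.

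For sufficiency, suppose $R(m,n)$ is even-sized with $m,n>1$. Since $mn$ is even at least one dimension is even, and by the symmetry $R(m,n)\cong R(n,m)$ I may assume $n$ is even. I would then exhibit a Hamiltonian cycle by a \emph{serpentine} (snake-like) traversal that reserves one boundary line as a connecting spine. Concretely, I reserve a boundary row and let the walk sweep up and down the columns through the remaining rows, one column at a time, reversing vertical direction at the top and just above the spine, and I finally run along the reserved spine to close the walk. The claim to verify is that this closed walk is simple and meets every vertex of $R(m,n)$ exactly once, which is immediate from the fact that each column is traversed once in the sweeping region and the spine is traversed once.

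The routine but essential step — and the place where the hypotheses are genuinely used — is checking that the serpentine \emph{closes up}. The condition $m,n>1$ guarantees there is a non-degenerate region to sweep alongside the spine, while the parity of $n$ is what aligns the up/down strokes so that the two loose ends of the sweep sit adjacent to the two extremities of the spine, allowing the spine to join them into a single cycle. I expect the main obstacle to be this closure verification: depending on the parity of the \emph{second} dimension one must choose whether to reserve a boundary row or a boundary column (equivalently, split into the subcases where both dimensions are even and where exactly one is), and then confirm the terminal stroke lands correctly. I would organize this as a short case analysis on the parity of $m$, or alternatively as a one-line induction on the number of columns, keeping the argument elementary.
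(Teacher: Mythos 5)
The paper does not actually prove this lemma: it is imported verbatim from the literature (the ``$[1]$'' tag; the result goes back to Itai, Papadimitriou and Szwarcfiter and is restated in Chen--Shen--Topor), so there is no in-paper argument to compare yours against. Your proof is the standard one and is essentially correct: necessity follows from bipartiteness (every cycle in $R(m,n)$ has even length, so a Hamiltonian cycle forces $mn$ even) together with the observation that a $1$-rectangle is a path and hence acyclic; sufficiency follows from the boustrophedon construction with one boundary line reserved as a closing spine.

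The one detail you should pin down is the pairing between the even dimension and the orientation of the spine, since as written your choices are mismatched. If you reserve a boundary \emph{row} and sweep the columns through the remaining $n-1$ rows, the sweep ends adjacent to the far end of the spine exactly when the number of \emph{columns} $m$ is even (each column stroke flips the vertical direction, so after $m$ strokes you are back at the row just above the spine iff $m$ is even); the parity of $n$ is irrelevant to closure in that configuration. So the correct set-up is: WLOG $m$ is even (using $mn$ even and the symmetry $R(m,n)\cong R(n,m)$), reserve the row $y=1$, run column $x=1$ upward from $(1,2)$ to $(1,n)$, snake through columns $2,\dots,m$ over rows $2,\dots,n$, land at $(m,2)$, drop to $(m,1)$, and return along the spine to $(1,1)$, which is adjacent to the start $(1,2)$. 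Here $n>1$ guarantees the swept region is nonempty. You flag this case check yourself, so it is a presentational fix rather than a gap, but the WLOG should be ``the dimension transverse to the spine is even,'' not ``$n$ is even'' with a row spine.
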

Two different vertices $\upsilon$ and $\upsilon'$ in $R(m,n)$ are
called \textit{color-compatible} if either both $\upsilon$ and
$\upsilon'$ are white and $R(m,n)$ is odd-sized, or $\upsilon$ and
$\upsilon'$ have different colors and $R(m,n)$ is even-sized.
Without loss of generality, we assume $s_{x} \leq t_{x}$.\\ For
$m,n\geq3$, a $L-$alphabet graph $L(m,n)$ (or $L$ for short) and a
$C-$alphabet graph $C(m,n)$ (or $C$ for short) are subgraphs of $R(3
m -2,5 n -4)$ induced by $V(R)\backslash \{ V|v_x=m+1,\ldots, 3m-2$
and $v_y=n+1,\ldots, 5n-4\}$, and $V(R)\backslash \{
V|v_x=m+1,\ldots, 3m-2$ and $v_y=n+1,\ldots, 4n-4\}$, respectively.
These alphabet graphs are shown in Figure \ref{a} for $m=4$ and
$n=3$.\\ An alphabet graph is called \textit{odd-sized} if its
corresponding rectangular graph is
odd-sized, and is called \textit{even-sized} otherwise.\\
In the following by $L(m,n)$ we mean an $L-$alphabet grid graph
$L(m,n)$, and by $C(m,n)$ we mean an $C-$alphabet grid graph
$C(m,n)$. We use $P(A(m,n),s,t)$ to indicate the problem of finding
a Hamiltonian path between vertices $s$ and $t$ in grid graph
$A(m,n)$, and use $(A(m,n),s,t)$ to indicate the grid graph $A(m,n)$
with two specified distinct vertices $s$ and $t$ of it, where $A$ is
rectangular grid graph, $L-$alphabet graph or $C-$alphabet graph.
$(A(m,n),s,t)$ is \textit{Hamiltonian} if there is a Hamiltonian
path between $s$ and $t$ in $A(m,n)$. In this paper, since $even
\times odd$ $L-$alphabet ($C-$alphabet) graph and $odd\times even$
$L-$alphabet graph are isomorphic, then we only consider $even
\times odd$ $L-$alphabet ($C-$alphabet) graphs. \\An even-sized grid
graph contains the same number of black and white vertices. Hence,
the two end-vertices of any Hamiltonian path in the graph must have
different colors. Similarly, in an odd-sized grid graph the number
of white vertices is one more than the number of black vertices.
Therefore, the two end-vertices of any Hamiltonian path in such a
graph must be white. Hence, the color-compatibility of $s$ and $t$
is a necessary condition for $(R(m,n),s,t)$ to have Hamiltonian.
Furthermore, Itai \textit{et al.} \cite{IPS:HPIGG} showed that if
one of the following conditions hold, then $(R(m,n),s,t)$ is not
Hamiltonian:
\begin{itemize}
\item [(F1)] $R(m,n)$ is a 1-rectangle and either $s$ or $t$ is not a corner
vertex (Figure \ref{RecFig}(a)).
\item[(F2)] $R(m,n)$ is a 2-rectangle
and $(s,t)$ is a nonboundary edge, i.e. $(s,t)$ is an edge and it is
not on the outer face (Figure \ref{RecFig}(b)).
\item [(F3)] $R(m,n)$ is
isomorphic to a 3-rectangle $R'(m,n)$ such that $s$ and $t$ are
mapped to $s'$ and $t'$, and:
\begin{enumerate}
\item $m$ is even,
\item $s'$ is black, $t'$ is white,
\item  $s'_{y}=2$ and $s'_{x}<t'_{x}$ (Figure \ref{RecFig}(c)) or
$s'_{y}\neq 2$ and $s'_{x}< t'_{x}-1$ (Figure \ref{RecFig}(d)).
\end{enumerate}
\end{itemize}
A Hamiltonian path problem $P(R(m,n),s,t)$ is \textit{acceptable} if
$s$ and $t$ are color-compatible and $(R,s,t)$ does not satisfy any
of the conditions $(F1)$,
$(F2)$ and $(F3)$.\\
The following theorem has been proved in \cite{IPS:HPIGG}.
\begin{thm}
\label{Theorem:1} Let $R(m,n)$ be a rectangular graph and $s$ and
$t$ be two distinct vertices. Then $(R(m,n),s,t)$ is Hamiltonian if
and only if $P(R(m,n),s,t)$ is acceptable.
\end{thm}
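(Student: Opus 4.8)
The plan is to prove the two directions of the equivalence separately. The \emph{necessity} direction (``only if'') asserts that a Hamiltonian path can exist only when the problem is acceptable, and is proved by ruling out each way in which acceptability can fail. The \emph{sufficiency} direction (``if'') is the substantial part and is proved constructively by induction on the size of $R(m,n)$, using a strip decomposition that repeatedly peels off an even-sized sub-rectangle.

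For necessity, I would first invoke the coloring argument already recorded in the text: since $R(m,n)$ is bipartite under the black/white $2$-coloring, any Hamiltonian path alternates colors, so in the even-sized case the two endpoints must receive opposite colors and in the odd-sized case (one extra white vertex) both endpoints must be white. This is exactly the color-compatibility of $s$ and $t$. Then I would dispatch the three forbidden configurations. For (F1), a $1$-rectangle is a simple path whose only Hamiltonian path runs between its two extreme vertices, forcing $s$ and $t$ to be the corners. For (F2), in a $2$-rectangle a non-boundary ``rung'' edge $(s,t)$ cannot be used: the unique Hamiltonian cycle of $R(2,n)$ is its outer boundary, which uses no interior rung, so closing an $s$--$t$ Hamiltonian path through $(s,t)$ into such a cycle is impossible. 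For (F3), I would run a column-by-column parity argument on the $3$-rectangle: by tracking how many times a Hamiltonian path can cross each vertical cut, the stated position and color constraints on $s'$ and $t'$ make the required number of crossings infeasible.

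For sufficiency I would argue by induction, first settling the base cases in which $R(m,n)$ is too thin to split, namely the $1$-, $2$-, and $3$-rectangles; each is treated by an explicit construction, the $3$-rectangle being handled by a finite case analysis that exploits precisely the failure of (F3). For the inductive step, assume $\min(m,n)\ge 4$. The idea is to choose a cut line that separates $R(m,n)$ into a sub-rectangle $R_1$ containing both $s$ and $t$ and a complementary sub-rectangle $R_2$ that is even-sized with both dimensions exceeding $1$. By Lemma~\ref{Lemma:1}, $R_2$ has a Hamiltonian cycle $C$; by the induction hypothesis applied to the smaller acceptable problem $P(R_1,s,t)$, $R_1$ has a Hamiltonian $s$--$t$ path $P$. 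Choosing the cut so that $P$ uses an edge $(a,b)$ parallel to the cut whose endpoints lie directly opposite an edge $(a',b')$ of $C$, I would splice $P$ and $C$ by deleting $(a,b)$ and $(a',b')$ and inserting the two crossing edges $(a,a')$ and $(b,b')$, producing a single Hamiltonian $s$--$t$ path of the whole graph.

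The crux, and the main obstacle, is guaranteeing that a legal cut always exists: the peeled subproblem $P(R_1,s,t)$ must again be acceptable (in particular it must avoid (F1)--(F3)), the residual rectangle $R_2$ must be even-sized with both sides greater than $1$ so that Lemma~\ref{Lemma:1} applies, and the splice must locate a matching pair of opposite edges across the cut. This forces a careful case distinction according to the parities of $m$ and $n$ and the relative positions and colors of $s$ and $t$; the genuinely delicate cases are those in which $s$ or $t$ lies near a boundary that would otherwise recreate an (F3)-type $3$-rectangle obstruction, and these must be resolved by choosing the cut direction and offset so that the obstruction is avoided. Once every case yields either a direct construction or a valid reduction to a strictly smaller acceptable problem, the induction closes and sufficiency follows.
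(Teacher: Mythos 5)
The paper does not prove this theorem at all: it is imported verbatim from Itai, Papadimitriou and Szwarcfiter \cite{IPS:HPIGG} (``The following theorem has been proved in \cite{IPS:HPIGG}''), so there is no in-paper proof to compare against. Your outline correctly reproduces the strategy of that original source --- bipartite coloring plus the three forbidden configurations $(F1)$--$(F3)$ for necessity, and induction via peeling off an even-sized sub-rectangle, applying Lemma~\ref{Lemma:1} to it, and splicing its Hamiltonian cycle into the inductive path for sufficiency. Be aware, though, that what you label ``the crux'' (showing a legal strip or split always exists for every acceptable instance, including the thin base cases and the near-boundary configurations that threaten to recreate an $(F3)$ obstruction) is precisely the bulk of the work in \cite{IPS:HPIGG}; your proposal names this obstacle but does not discharge it, so as written it is a faithful high-level sketch rather than a complete proof.
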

\begin{figure}[thb]
 \centering
  \includegraphics[scale=1]{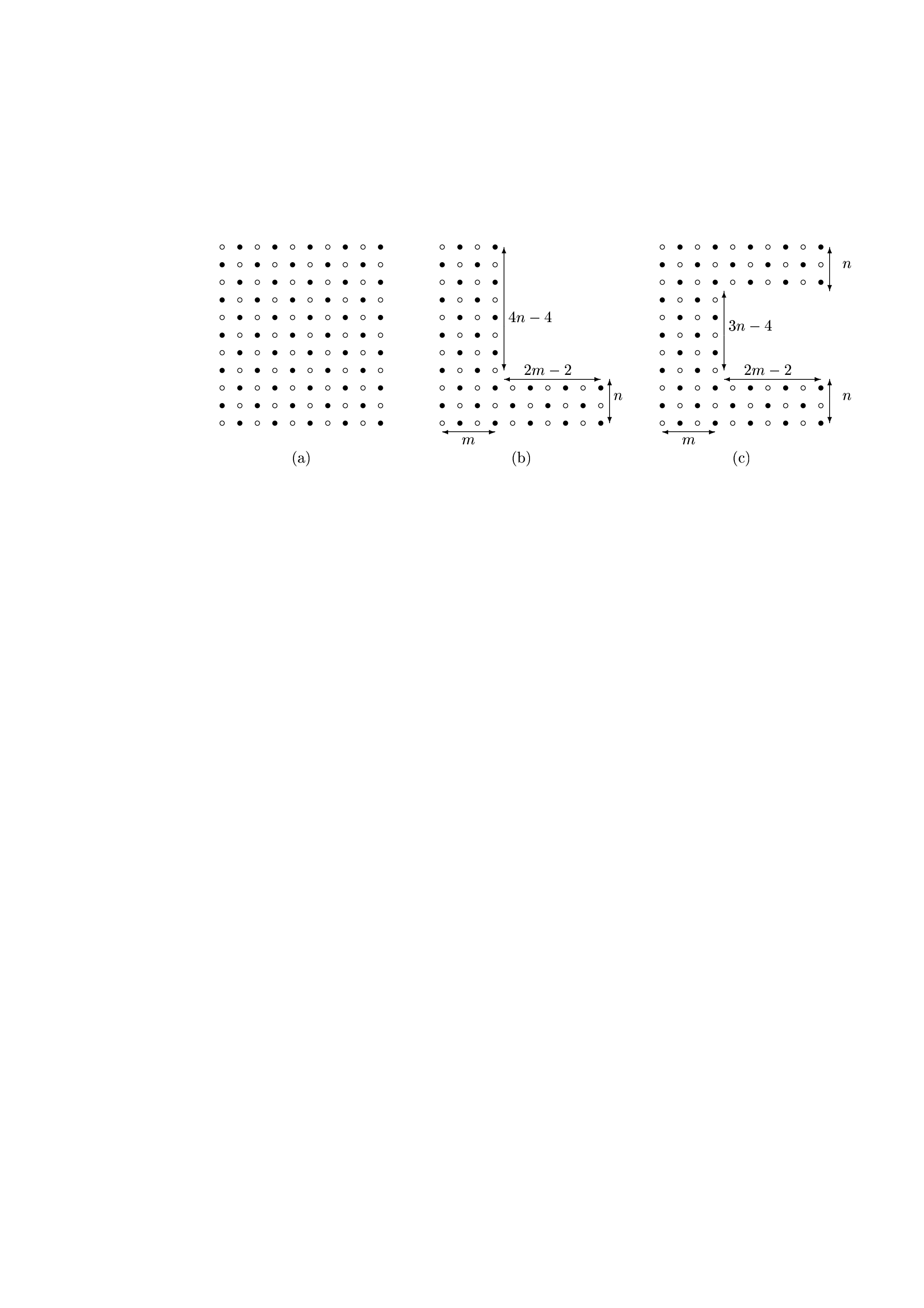}
  \caption[]%
  {\small (a) A rectangular grid graph $R(10,11)$, (b)  a $L-$alphabet grid graph
  $L(4,3)$, (c) a $C-$alphabet grid graph $C(4,3)$. }
  \label{a}
\end{figure}
\begin{figure}[thb]
  \centering
  \includegraphics[scale=1]{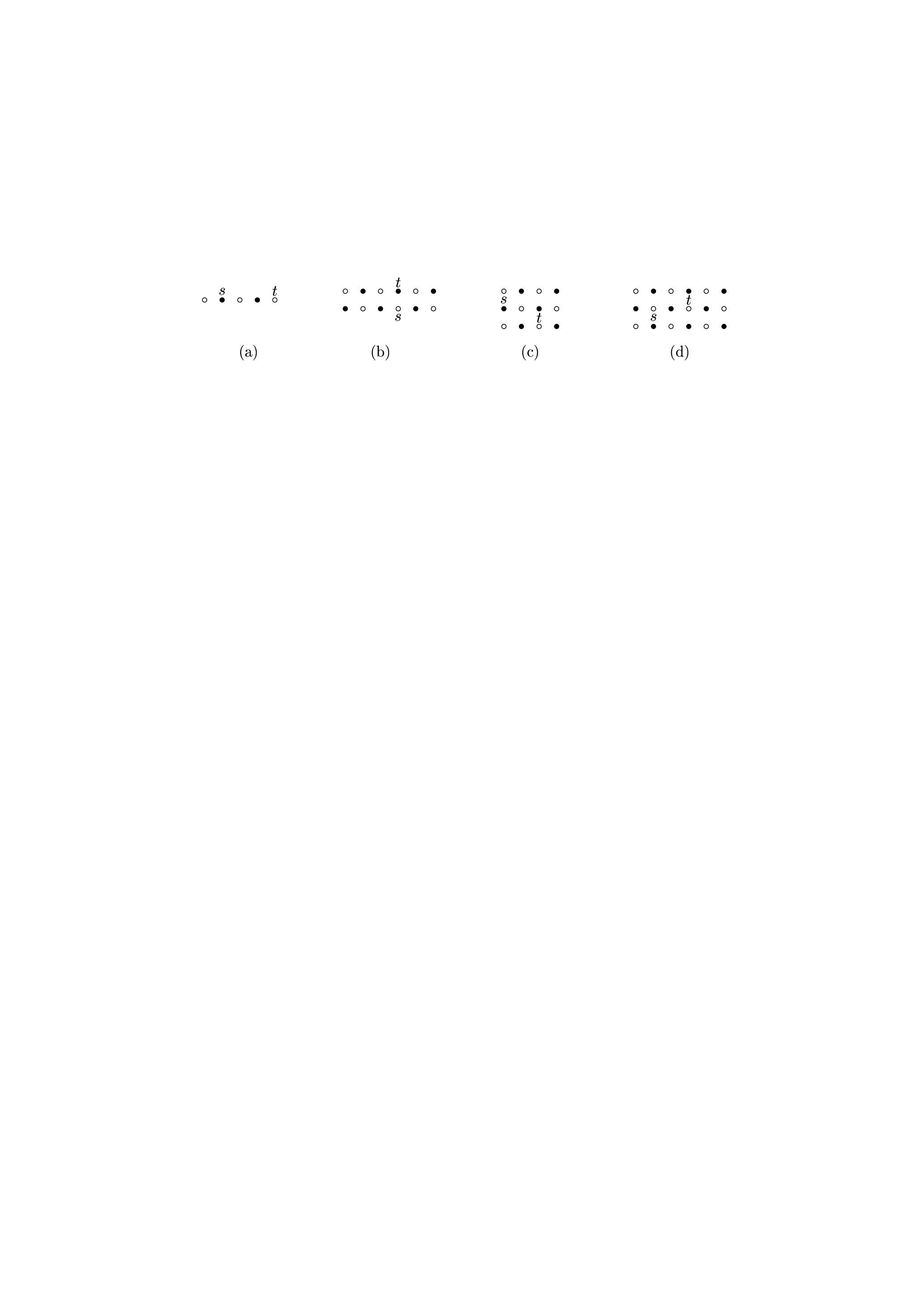}
  \caption[]%
  {\small Rectangular grid graphs in which there is no Hamiltonian path between $s$ and $t$.}
  \label{RecFig}
\end{figure}
\section{The Hamiltonian path in alphabet graphs}
In this section, we give sufficient and necessary conditions for the
existence of a Hamiltonian path in $L-$alphabet and $C-$alphabet
graphs. We also present an algorithm for finding a Hamiltonian path
between two given vertices of
these graphs. \\
A \textit{separation} of $L-$alphabet graph is a partition of $L$
into two vertex disjoint rectangular grid graphs $R_1$ and $R_2$,
i.e. $V(L)=V(R_{1})\cup V(R_{2})$, and $V(R_{1})\cap
V(R_2)=\emptyset$. A $separation$ of $C-$alphabet graph is a
partition of $C$ into a $L-$alphabet graph and a rectangular grid
graph.
\begin{lem} \label{Lemma:3} Let $A(m,n)$ be a $L-$alphabet or $C-$alphabet grid
graph and let $R$ be the smallest rectangular grid graph includes
$A$. If $(A(m,n),s,t)$ has a Hamiltonian path between $s$ and $t$,
then $(R(m,n),s,t)$ also has a Hamiltonian path.
\end{lem}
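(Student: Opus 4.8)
The plan is to reduce the statement to Theorem~\ref{Theorem:1} rather than to manipulate the Hamiltonian path of $A$ directly. The first observation is that, since $A$ is an $L$- or $C$-alphabet graph, the smallest rectangular grid graph $R$ containing it is its bounding box $R(3m-2,5n-4)$; because $m,n\ge 3$, both of its side lengths are at least $7$, and in particular both strictly exceed $3$. By Theorem~\ref{Theorem:1}, to prove that $(R,s,t)$ is Hamiltonian it suffices to show that $P(R,s,t)$ is acceptable, i.e. that $s$ and $t$ are color-compatible and that none of the conditions (F1), (F2), (F3) holds.

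First I would dispose of (F1)--(F3). Each of these conditions requires $R$ to be (isomorphic to) a $1$-, $2$-, or $3$-rectangle, that is, to have a side of length $1$, $2$, or $3$. Since both side lengths of $R$ are strictly larger than $3$, none of (F1), (F2), (F3) can apply, so this part is immediate and is exactly where the size of the bounding rectangle does the work.

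It remains to establish color-compatibility of $s$ and $t$ in $R$. Here I would use that $A$ and $R$ inherit the same vertex coloring from the integer grid and, by definition, have the same size-parity (an alphabet graph is even- or odd-sized precisely when its corresponding rectangular graph is). A short count shows that $A$, like $R$, contains equally many black and white vertices when even-sized and exactly one more white vertex when odd-sized. Consequently a Hamiltonian path of $A$, which alternates colors, must have end-vertices of different colors in the even-sized case and two white end-vertices in the odd-sized case; in either case $s$ and $t$ satisfy exactly the color-compatibility condition demanded for $R$. Combining this with the previous paragraph shows that $P(R,s,t)$ is acceptable, and Theorem~\ref{Theorem:1} then yields a Hamiltonian path of $R$ between $s$ and $t$.

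The only genuinely non-formal step is the last one: confirming that the black/white balance of the $L$- or $C$-alphabet graph agrees with the parity of its bounding rectangle, so that color-compatibility really does transfer from $A$ to $R$. This is a routine but necessary verification, most conveniently carried out by splitting $A$ into even-width rectangular strips along a separation and balancing the colors strip by strip. Everything else follows mechanically once $R$ is identified as the (large) bounding rectangle, which is what renders the failure conditions (F1)--(F3) vacuous.
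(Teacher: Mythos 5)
Your proof is correct, but it takes a genuinely different route from the paper's. The paper argues constructively: it observes that $R-A$ is itself an even-sized rectangular grid graph (of size $(2m-2)\times(4n-4)$ or $(2m-2)\times(3n-4)$), hence has a Hamiltonian cycle by Lemma~\ref{Lemma:1}, and then merges that cycle into the given Hamiltonian path of $A$ across the common boundary by the standard edge-exchange, producing a Hamiltonian path of $R$ from $s$ to $t$. You instead discard the path of $A$ except for the color information it carries, verify that $P(R,s,t)$ is acceptable --- (F1)--(F3) are vacuous since both sides of the bounding box $R(3m-2,5n-4)$ are at least $7$, and color-compatibility transfers because the removed rectangle $R-A$ is even-sized and hence color-balanced, so $A$ and $R$ have the same white-minus-black count --- and then invoke Theorem~\ref{Theorem:1}. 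Both arguments are sound. Your reduction is logically cleaner and sidesteps the merging step the paper leaves implicit (``of [1]''), but it leans on the full strength of Itai et al.'s characterization and in fact establishes the stronger statement that color-compatibility of $s$ and $t$ alone already makes $(R,s,t)$ Hamiltonian, so the hypothesis that $A$ has a Hamiltonian path is used only to extract the colors of $s$ and $t$; the paper's construction genuinely uses that path and is the same pattern reused later in Lemmas~\ref{Lemma:9} and~\ref{Lemma:10}. One small suggestion: for your color-balance step, counting via the even-sized complement $R-A$ is cleaner than stripping $A$ itself, since an even-sized rectangle trivially has equally many black and white vertices.
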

\begin{proof}
Since $R-A$ is a rectangular grid graph with even size of
$(2m-2)\times (4n-4)$ or $(2m-2)\times (3n-4)$, then by Lemma
\ref{Lemma:1} it has a Hamiltonian cycle. By combining the
Hamiltonian cycle and the Hamiltonian path of $(A(m,n),s,t)$ of [1],
a Hamiltonian path between $s$ and $t$ for $(R,s,t)$ is obtained.
\end{proof}
Combining Lemma \ref{Lemma:3} and Theorem \ref{Theorem:1} the
following corollary is trivial.
\begin{cory}
\label{Corollary:1} Let $A(m,n)$ be a $L-$alphabet or $C-$alphabet
grid graph and $R$ be its smallest including rectangular grid graph.
If $(A(m,n),s,t)$ has a Hamiltonian path between $s$ and $t$, then
$s$ and $t$ must be color-compatible in $R(m,n)$.
\end{cory}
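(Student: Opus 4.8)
The plan is to prove the corollary by directly chaining the two results already established, namely Lemma~\ref{Lemma:3} and Theorem~\ref{Theorem:1}. The key observation is that color-compatibility is one of the three requirements packaged into the definition of an \emph{acceptable} problem, so once acceptability of the rectangular problem is secured, color-compatibility falls out for free.

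First I would invoke the hypothesis that $(A(m,n),s,t)$ possesses a Hamiltonian path between $s$ and $t$. Applying Lemma~\ref{Lemma:3} to this path immediately yields a Hamiltonian path between $s$ and $t$ in the smallest including rectangular grid graph $R$; that is, $(R(m,n),s,t)$ is Hamiltonian. Next I would appeal to the ``only if'' direction of Theorem~\ref{Theorem:1}, which asserts that if $(R(m,n),s,t)$ is Hamiltonian then $P(R(m,n),s,t)$ is acceptable. By the definition of acceptability, this forces $s$ and $t$ to be color-compatible in $R(m,n)$ (it also rules out the failure configurations $(F1)$, $(F2)$, $(F3)$, but only the color-compatibility clause is needed here). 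This completes the argument.

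The only point requiring any care---and it is a very minor one---is the consistency of the coloring and the notation across $A$ and $R$. The vertices $s$ and $t$ inherit their black/white labels from the two-dimensional integer grid, so their colors are the same whether viewed inside $A$ or inside its enclosing rectangle $R$; moreover, the odd- versus even-sized classification of $A$ was defined precisely so as to coincide with that of $R$. Hence the color-compatibility extracted from Theorem~\ref{Theorem:1} is exactly the color-compatibility asserted in the statement, and no separate verification is needed.

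I do not anticipate any genuine obstacle: the entire content is the transfer of a Hamiltonian path from $A$ up to $R$ via Lemma~\ref{Lemma:3}, followed by reading off a single necessary condition from Theorem~\ref{Theorem:1}. This is consistent with the text's own remark that the corollary is trivial, and the proof should occupy only a couple of lines.
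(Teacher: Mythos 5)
Your proposal is correct and follows exactly the paper's intended route: the paper states the corollary is obtained "combining Lemma \ref{Lemma:3} and Theorem \ref{Theorem:1}", which is precisely your chain of transferring the Hamiltonian path to $R$ and then reading color-compatibility off the acceptability condition. Your extra remark about the coloring being inherited consistently from the integer grid is a harmless and reasonable clarification.
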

Therefore, the color-compatibility of $s$ and $t$ is a necessary
condition for $(L(m,n),s,t)$ and $(C(m,n),s,t)$ to have Hamiltonian paths.\\
The length of a path in a grid graph means the number of vertices of
the path. In any grid graph, the length of any path between two
same-colored vertices is odd and the length of any path between two
different-colored vertices is even.
\begin{lem}
\label{Lemma:4} Let $L(m,n)$ be a $L-$alphabet grid graph and $s$
and $t$ be two given vertices of $L$. Let $R(2m-2,n)$ and
$R(m,5n-4)$ be a separation of $L(m,n)$. If $t_x>m+1$ and
$R(2m-2,n)$ satisfies condition $(F3)$, then $L(m,n)$ does not have
any Hamiltonian path between $s$ and $t$.
\end{lem}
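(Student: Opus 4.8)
The plan is to argue by contradiction: assume $(L(m,n),s,t)$ admits a Hamiltonian path $P$, and convert $P$ into a Hamiltonian path of $R(2m-2,n)$ between $s$ and $t$. Since $R(2m-2,n)$ satisfies $(F3)$, Theorem~\ref{Theorem:1} guarantees that $(R(2m-2,n),s,t)$ is \emph{not} Hamiltonian, which is the contradiction. First I would record what $(F3)$ forces: $R(2m-2,n)$ is isomorphic to a $3$-rectangle, so $n=3$; the long side $2m-2$ is even; $s$ is black and $t$ is white; and both $s$ and $t$ lie in $R(2m-2,n)$ (since $(F3)$ refers to both of them). The hypothesis $t_x>m+1$ guarantees that $t$ is strictly interior to the right arm, so it never coincides with a boundary vertex of the separation.

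Next I would analyze how $P$ meets the separation $V(L)=V(R(m,5n-4))\cup V(R(2m-2,n))$. The only edges joining the two parts are the $n=3$ \emph{bridge} edges $(m,r)\,\text{--}\,(m+1,r)$ for $r\in\{1,2,3\}$, because no vertical edge crosses the interface. Since both $s$ and $t$ lie in $R(2m-2,n)$, every maximal subpath of $P$ inside the left arm $R(m,5n-4)$ is entered and left through a bridge; hence the number of bridge edges used by $P$ is even, it is at least $2$ because the left arm is nonempty and must be covered, and it is at most $3$ because there are only three bridges, so it equals exactly $2$. Thus $P$ makes a single excursion into the left arm, leaving at some $(m+1,i)$ and returning at some $(m+1,j)$ with $i\neq j$, and its restriction to the right arm is a pair of subpaths $P_1$ from $s$ to $(m+1,i)$ and $P_2$ from $(m+1,j)$ to $t$ that together cover $V(R(2m-2,n))$.

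The key step is to prove $|i-j|=1$, so that $(m+1,i)$ and $(m+1,j)$ are adjacent and the excursion can be short-circuited by the single edge $(m+1,i)\,\text{--}\,(m+1,j)$; following $P_1$, then this edge, then $P_2$ then gives a Hamiltonian path of $R(2m-2,n)$ from $s$ to $t$, the desired contradiction. For this I would use a colour-counting argument. Since $m$ is even, $(m+1,r)$ is white exactly when $r$ is odd, and $R(2m-2,n)$ is even-sized, so it has equally many black and white vertices. Writing $\delta_k$ for (white minus black) along $P_k$, the partition gives $\delta_1+\delta_2=0$; but $s$ is black and $t$ is white, so $\delta_1$ and $\delta_2$ are determined by the colours of $(m+1,i)$ and $(m+1,j)$, and checking the few cases shows that $\delta_1+\delta_2=0$ forces exactly one of $i,j$ to equal $2$ and the other to be odd, i.e. $|i-j|=1$. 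This parity step is the heart of the argument and the part I expect to be most delicate: \emph{a priori} the two crossings could occur in rows $1$ and $3$, which would leave $(m+1,i)$ and $(m+1,j)$ nonadjacent and block the short-circuit; ruling out that configuration is exactly what the colour count accomplishes. Once adjacency is secured, the reconnection together with Theorem~\ref{Theorem:1} completes the proof.
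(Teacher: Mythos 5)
Your treatment of the configuration in which both $s$ and $t$ lie in the horizontal arm $R(2m-2,n)$ is correct, and in fact cleaner than the paper's: the paper splits this into three sub-cases (2.1--2.3) according to which pair of connector vertices the excursion uses, whereas your single colour count $\delta_1+\delta_2=0$ handles all of them at once and correctly shows that the two crossing rows must have opposite parities, hence be adjacent, so the excursion can be short-circuited into a Hamiltonian path of $R(2m-2,n)$ contradicting Theorem~\ref{Theorem:1}. The counting of bridge edges (even, at least $2$, at most $3$, hence exactly $2$) is also sound.

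The gap is your opening assumption that ``both $s$ and $t$ lie in $R(2m-2,n)$ (since $(F3)$ refers to both of them).'' The lemma only forces $t$ into the arm (via $t_x>m+1$); the paper's proof devotes its entire Case~1 to the situation where $s$ lies in the vertical bar $R(m,5n-4)$, interpreting ``$R(2m-2,n)$ satisfies $(F3)$'' there with the role of $s'$ played by the middle connector vertex $w$ in row $2$. That case is genuinely needed for the necessity direction of Theorem~\ref{Theorem:2}, since the acceptability condition for $P(L(m,n),s,t)$ must reject such instances, and it does not reduce to your case: with $s$ outside the arm the path crosses the interface an odd number of times ($1$ or $3$), so its restriction to the arm is either a single Hamiltonian path from a connector vertex to $t$ (excluded by $(F3)$ with $s'=w$) or two pieces whose combined length has the wrong parity for the even-sized arm. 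You should either justify why the hypothesis forces $s$ into the arm (under the paper's reading it does not) or add this case; as written the proof establishes only the paper's Case~2.
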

\begin{proof}
Without loss of generality, let $s$ and $t$ be color-compatible.
Assume that $R(2m-2,n)$ satisfies condition $(F3)$. We show that
there is no Hamiltonian path in $L(m,n)$ between $s$ and $t$. Assume
to the contrary that $L(m,n)$ has a Hamiltonian path $P$. The
following cases are possible:\\
\indent Case 1. In this case, $t$ is
in $R(2m-2,n)$ and $s$ is not in $R(2m-2,n)$. Since $n=3$ there are
exactly three vertices $v,w$ and $u$ in $R(2m-2,n)$ which are
connected to $R(m,5n-4)$, as shown in Figure \ref{e}. The following
sub-cases are possible for the Hamiltonian path $P$.\\
\indent\indent Case 1.1. The Hamiltonian path $P$ of $L(m,n)$ that
starts from $s$ may enter to $R(2m-2,n)$ for the first time through
one of the vertices $v,w$ or $u$ and passes through all the vertices
of $R(2m-2,n)$ and end at $t$.
This case is
not possible because we assumed that $R(2m-2,n)$ satisfies $(F3)$
($t^{'}=t$ and $s^{'}=w$).\\
\indent\indent Case 1.2. The Hamiltonian path $P$ of $L(m,n)$ may
enter to $R(2m-2,n)$, passes through some vertices of it, then
leaves it and enter it again and passes through all the remaining
vertices of it and finally ends at $t$. In this case, two sub-paths
of $P$ which are in $R(2m-2,n)$ are called $P_1$ and $P_2$, $P_1$
from $v$ to $u$ ($v$ to $w$ or $u$ to $w$) and $P_2$ from $w$ to $t$
($u$ to $t$ or $v$ to $t$).
This
case is not also possible because the size of $P_1$ is odd (even)
and the size of $P_2$ is even (odd), then $|P_1+P_2|$ is odd while
$R(2m-2,n)$ is even, which is a contradiction.\\
\indent Case 2. In this case, $s$ and $t$ are in $R(2m-2,n)$. The
following cases may be considered:\\
\indent\indent Case 2.1. The Hamiltonian path $P$ of $L(m,n)$ starts
from $s$ passes through some vertices of $R(2m-2,n)$, leaves
$R(2m-2,n)$ at $v$ ($u$), then passes through all the vertices of
$R(m,5n-4)$ and reenter to $R(2m-2,n)$ at $w$ goes to $u (v)$ and
passes through all the remaining vertices of $R(2m-2,n)$ and ends at
$t$. In this case by connecting $v (u)$ to $w$ we obtain a
Hamiltonian path from $s$ to $t$ in $R(2m-2,n)$, which contradicts
to the assumption that $R(2m-2,n)$ satisfies $(F3)$, see Figure
\ref{l}(a).\\
\indent\indent Case 2.2. The Hamiltonian path $P$ of $L(m,n)$ starts
from $s$ leaves $R(2m-2,n)$ at $v (u)$, then passes through all the
vertices of $R(m,5n-4)$ and reenter to $R(2m-2,n)$ at $u(v)$ goes to
$w$ and passes through all the remaining vertices of $R(2m-2,n)$ and
ends at $t$. In this case, two parts of $P$ resides in $R(2m-2,n)$.
The part $P_1$ starts from $s$ ends at $v(u)$, and the part $P_2$
starts from $u(v)$ ends at $t$. The size of $P_1$ is even and the
size of $P_2$ is odd while the size of $R(2m-2,n)$ is even, which is
a contradiction, see
Figure \ref{l}(b).\\
\indent\indent Case 2.3. Another case that may imagine is that the
Hamiltonian path $P$ of $L(m,n)$ starts from $s$ leaves $R(2m-2,n)$
at $w$ and reenters $R(2m-2,n)$ at $v(u)$ and then goes to $t$. But
in this case vertex $u(v)$ can not be in $P$, which is a
contradiction, see Figure \ref{l}(c).\\ Thus the proof of Lemma
\ref{Lemma:4} is completed.
\end{proof}
\begin{figure}[thb]
  \centering
  \includegraphics[scale=1]{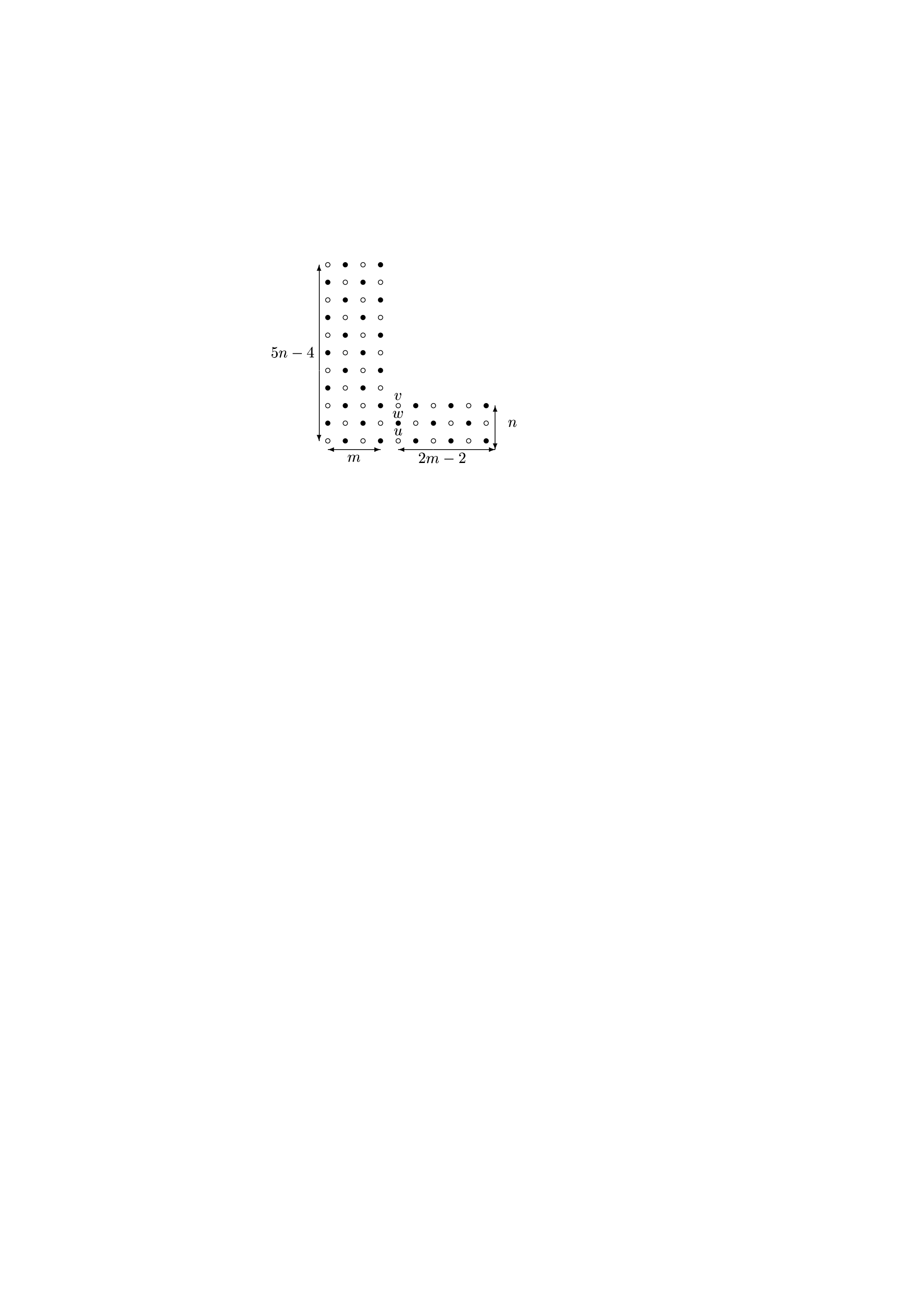}
  \caption[]%
  {\small}
  \label{e}
\end{figure}
\begin{figure}[thb]
  \centering
  \includegraphics[scale=1]{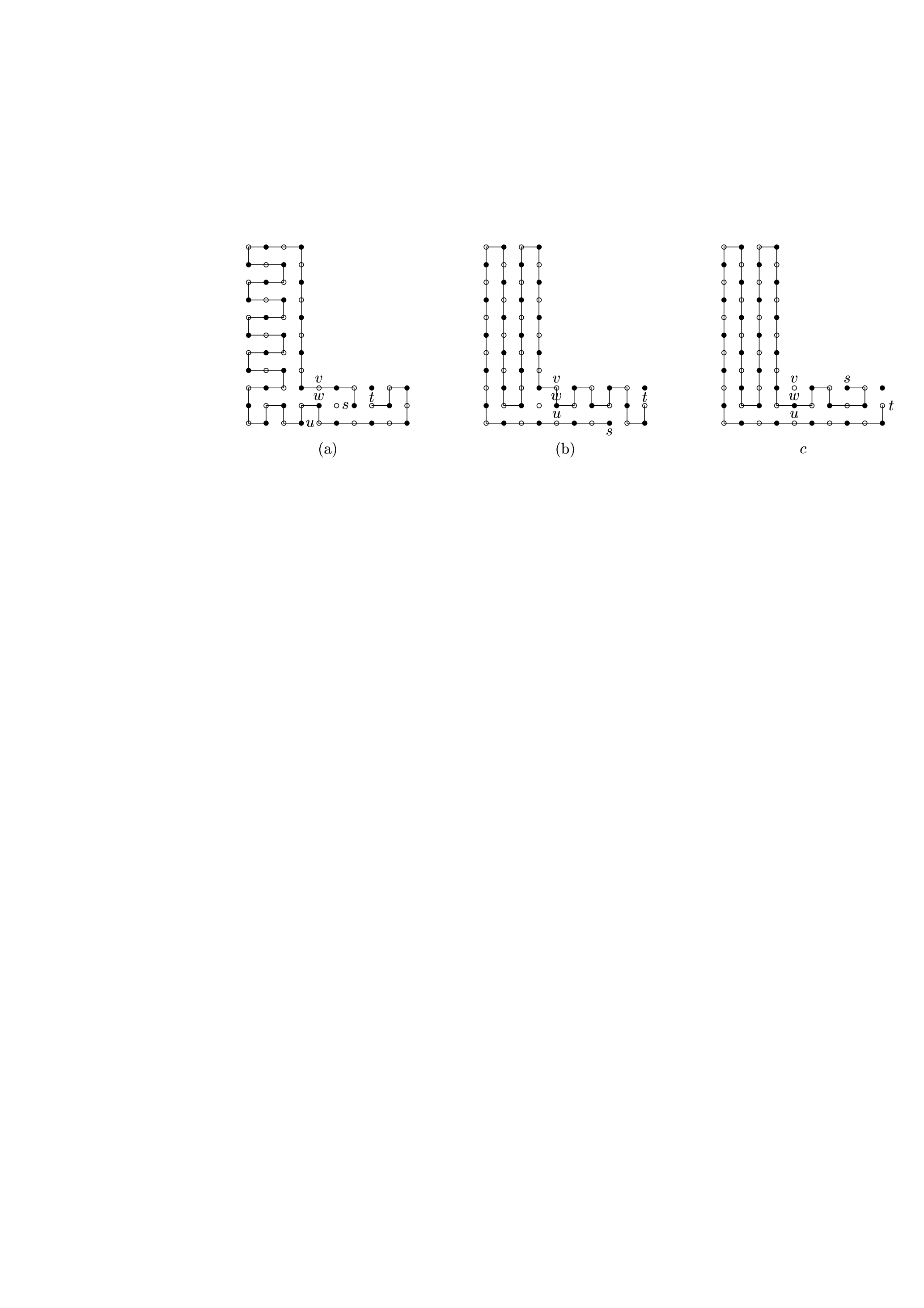}
  \caption[]%
  {\small  $L-$alphabet grid graphs in which there is no Hamiltonian path between $s$ and $t$.}
  \label{l}
\end{figure}
\begin{lem}
\label{Lemma:5} Assume that $C(m,n)$ is a $C-$alphabet grid graph
and $s$ and $t$ are two given vertices of $C(m,n)$. Let $L(m,n)$ and
$R(2m-2,n)$ be a separation of $C(m,n)$. If $L(m,n)$ does not have
Hamiltonian path, then $C(m,n)$ does not have Hamiltonian path
between $s$ and $t$.
\end{lem}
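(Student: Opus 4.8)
The plan is to prove the contrapositive: if $(C(m,n),s,t)$ has a Hamiltonian path then so does $(L(m,n),s,t)$; since the hypothesis is otherwise vacuous, I take $s$ and $t$ to lie in the $L(m,n)$ part of the separation. The separation writes $C(m,n)=L(m,n)\cup R(2m-2,n)$ with the two pieces vertex-disjoint, and the rectangle $R(2m-2,n)$ is joined to $L(m,n)$ only along the single column where they meet, just as in the separation analyzed in Lemma \ref{Lemma:4}. For $n=3$ this means exactly three vertices $v,w,u$ of $R(2m-2,n)$ are adjacent to $L(m,n)$, through interface edges $vv',ww',uu'$ whose $L$-endpoints $v',w',u'$ are consecutive on the boundary column of $L(m,n)$. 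Since $2m-2$ is even, $R(2m-2,n)$ is even-sized and, by Lemma \ref{Lemma:1}, carries a Hamiltonian cycle, which becomes the main tool in the general case below.

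Let $P$ be a Hamiltonian path of $C(m,n)$ from $s$ to $t$. Because $s,t\in L(m,n)$, each maximal subpath of $P$ lying inside $R(2m-2,n)$ (an \emph{excursion}) must both enter and leave through interface edges, so it uses exactly two of the three available ones. The excursions are vertex-disjoint and together cover the non-empty rectangle $R(2m-2,n)$, so with only three interface edges there is exactly one excursion, and it is a Hamiltonian path of $R(2m-2,n)$ joining two of $v,w,u$. By Theorem \ref{Theorem:1}, the two ends of a Hamiltonian path in the even-sized $R(2m-2,n)$ must be color-compatible, hence of opposite colors; this excludes the extreme pair $\{v,u\}$, which share a color, and forces the excursion to run between an adjacent pair, say $v$ and $w$. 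Deleting this excursion together with its two interface edges breaks $P$ into two $L(m,n)$-paths, one ending at $v'$ and one starting at $w'$, which jointly cover all of $L(m,n)$; since $v'$ and $w'$ are adjacent, inserting the edge $v'w'$ fuses them into a single Hamiltonian path of $(L(m,n),s,t)$, the contradiction that proves the lemma for $n=3$.

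The step I expect to be the genuine obstacle is the reconnection once $n>3$. Then the boundary column carries $n$ interface edges, $P$ may make up to $\lfloor n/2\rfloor$ excursions, and a single excursion can join two \emph{non-adjacent} vertices of the boundary column, so the splicing edge used above need not exist. Rather than preserve $P$ edge-for-edge, I would use the Hamiltonian cycle of $R(2m-2,n)$ from Lemma \ref{Lemma:1} to rewrite each excursion as one whose two interface edges are consecutive on the column, and then reconnect the matching $L$-endpoints either by a boundary edge or by absorbing the intervening boundary vertices along the column. Proving that all these local rewritings can be carried out simultaneously without closing a cycle or orphaning a vertex---a parity-and-matching argument on the excursions and the $n$ boundary vertices, fed by the even size of $R(2m-2,n)$ and the color-compatibility from Theorem \ref{Theorem:1}---is the technical core of the general case.
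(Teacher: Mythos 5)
Your handling of the substantive case $n=3$ is correct and complete, and it is in essence the same argument the paper intends: the paper's own proof of this lemma is a one-line pointer to the proof of Lemma \ref{Lemma:4}, which is precisely the analysis of how a Hamiltonian path can cross the three interface vertices $v,w,u$ combined with the parity of the even-sized rectangle $R(2m-2,n)$. Your contrapositive packaging --- exactly one excursion because each excursion consumes two of the three interface edges, its endpoints must have opposite colors in the balanced rectangle, hence it joins an adjacent pair, and the edge $v'w'$ splices the two leftover subpaths (which necessarily lie in different components after the excursion is deleted) into a Hamiltonian path of $L(m,n)$ --- is actually tighter and more explicit than what the paper writes. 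Your reading that $s,t$ must lie in the $L(m,n)$ part for the hypothesis to be meaningful is also the right interpretation of the (ambiguously stated) lemma.

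The gap you flag for $n>3$ is real as your proof stands, but the machinery you propose to close it --- rewriting excursions via the Hamiltonian cycle of $R(2m-2,n)$ so that their interface edges become consecutive, plus a simultaneous-reconnection argument --- is both unproven and unnecessary, because the contrapositive ``$C$ Hamiltonian between $s,t$ implies $L$ Hamiltonian between $s,t$'' proved by direct path surgery is far stronger than what the lemma is used for. Within this paper's framework, $(L(m,n),s,t)$ can fail to have a Hamiltonian path for only two reasons: $s$ and $t$ are not color-compatible, or the sub-rectangle $R(2m-2,n)$ of $L$ satisfies condition (F3). Since (F3) applies only to 3-rectangles and $2m-2$ is even, the second reason forces $n=3$, where your excursion argument applies verbatim. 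The first reason transfers to $C(m,n)$ with no reference to the separation at all: a color imbalance between $s$ and $t$ already rules out a Hamiltonian path of $C(m,n)$ by the global black--white count (Corollary \ref{Corollary:1}). So the clean way to finish is to case-split on why $L$ fails, rather than to attempt the general-$n$ path surgery.
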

\begin{proof}
The proof is similar to the proof of Lemma \ref{Lemma:4}, for more
details see Figure \ref{v}.
\end{proof}
\begin{figure}[thb]
  \centering
  \includegraphics[scale=1]{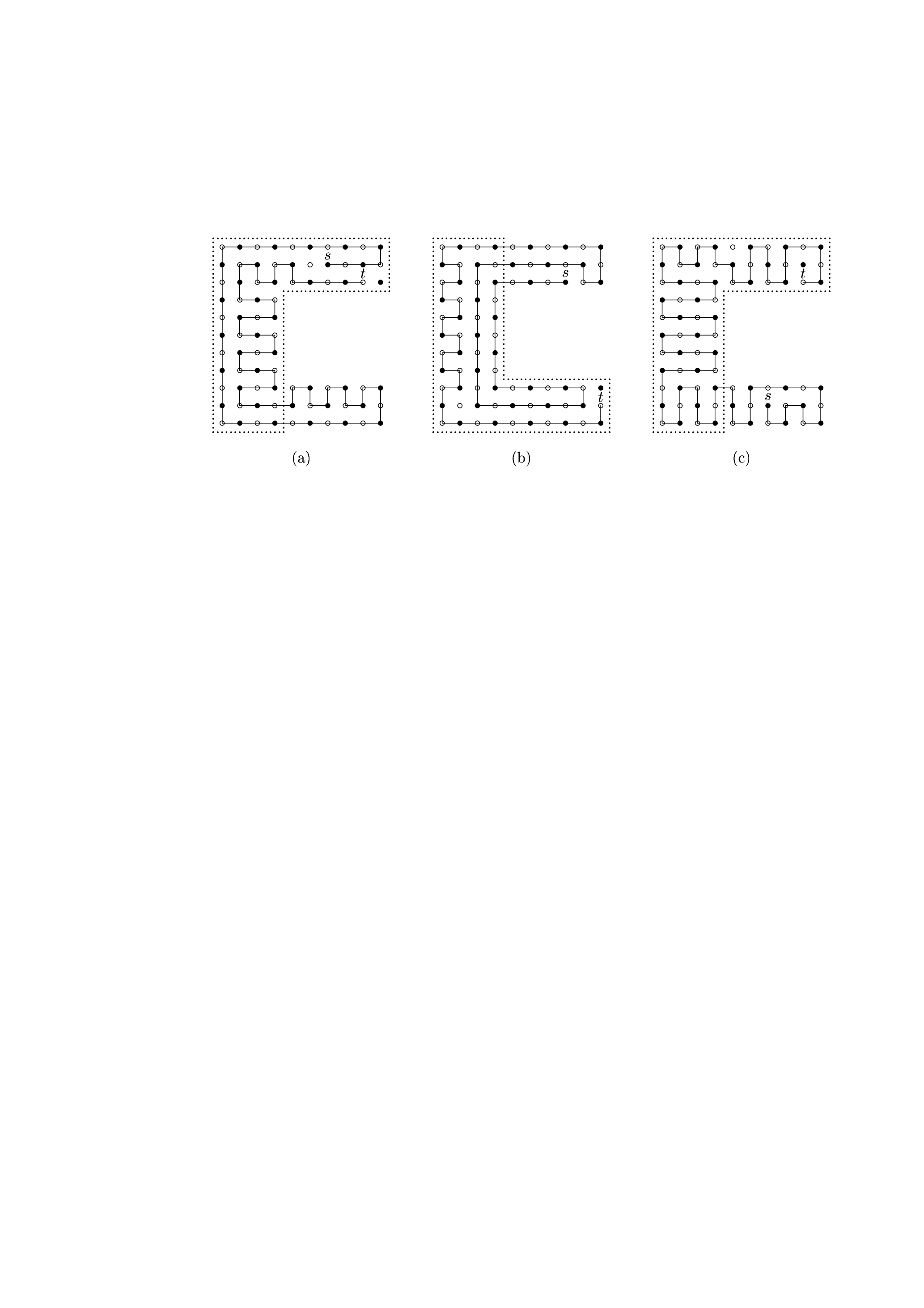}
  \caption[]%
  {\small  $C-$alphabet grid graphs in which there is no Hamiltonian path between $s$ and $t$.}
  \label{v}
\end{figure}
A Hamiltonian path problem $P(L(m,n),s,t)$ is \textit{acceptable} if
$s$ and $t$ are color-compatible and $R(2m-2,n)$ does not satisfy
the condition $(F3)$, and also $P(C(m,n),s,t)$ is
\textit{acceptable} if
$P(L(m,n),s,t)$ is acceptable, where $L(m,n)$ is a partition of $C(m,n)$.\\
 Now, we
are have shown that all acceptable Hamiltonian path problems have
solutions. Our algorithm is described in the following: \\ If the
given graph is $L-$alphabet, then it is divided into two rectangular
grid graphs and two Hamiltonian paths in them are found by the
algorithm of \cite{CST:AFAFCHPIM}.\\If the given graph is
$C-$alphabet, then it is divided into a $L-$alphabet graph and a
rectangular grid
graph, and the Hamiltonian path in $L-$alphabet graph is found as before.\\
In the following we discuss the details of this dividing and
merging.\\
A rectangular subgraph $S$ of $L-$alphabet or $C-$alphabet graph $A$
\textit{strips} a Hamiltonian path problem $P(A(m,n),s,t)$, if:
\begin{enumerate}
\item $S$ is even-sized.
\item $S$ and $A-S$ is a separation of $A$.
\item $s,t\in A-S$
\item $A-S$ is acceptable.
\end{enumerate}
\begin{lem}
\label{Lemma:9} Let $P(L(m,n),s,t)$ be an acceptable Hamiltonian
path problem, and $S$ strips it. If $L-S$ has a Hamiltonian path
between $s$ and $t$, then $(L(m,n),s,t)$ has a Hamiltonian path
between $s$ and $t$.
\end{lem}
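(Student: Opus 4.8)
The plan is to prove the statement constructively, reversing the combination step already used in the proof of Lemma~\ref{Lemma:3}. Because $S$ strips $P(L(m,n),s,t)$, it is an even-sized rectangular grid graph, and since $S$ together with $L-S$ is a separation of the $L$-shaped region both side lengths of $S$ are at least $3$ (as $m,n\geq 3$); Lemma~\ref{Lemma:1} therefore provides a Hamiltonian cycle $C$ of $S$. By hypothesis $L-S$ carries a Hamiltonian path $P$ from $s$ to $t$, and stripping condition~$(3)$ gives $s,t\in L-S$, so the two prescribed endpoints lie in the part we are \emph{not} inserting. The idea is to cut $P$ open exactly once and thread all of $S$ through a single unit square straddling the common boundary of $S$ and $L-S$.

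To make this precise I would first fix the interface. As $S$ and $L-S$ abut along one common line segment, write $a_1,\dots,a_\ell$ for the vertices of $L-S$ on it and $a_1',\dots,a_\ell'$ for the matching vertices of $S$, so that $a_i\sim a_i'$ and consecutive $a_i,a_{i+1}$ (and likewise $a_i',a_{i+1}'$) are adjacent. Suppose $P$ contains some interface edge $(a_i,a_{i+1})$. I would route $C$ so that it uses the parallel edge $(a_i',a_{i+1}')$; then $C$ minus $(a_i',a_{i+1}')$ is a Hamiltonian path of $S$ from $a_i'$ to $a_{i+1}'$, while $P$ minus $(a_i,a_{i+1})$ splits into a piece ending at $a_i$ and a piece starting at $a_{i+1}$. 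Reconnecting through the cross edges $a_i a_i'$ and $a_{i+1}'a_{i+1}$ yields the single path $s\cdots a_i-a_i'\cdots a_{i+1}'-a_{i+1}\cdots t$, which visits every vertex of $S$ and of $L-S$ once; this is the desired Hamiltonian path of $L(m,n)$ from $s$ to $t$. (The same splice works verbatim when the cut edge is incident to $s$ or $t$.)

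The hard part will be guaranteeing that a usable interface edge $(a_i,a_{i+1})$ is actually present, since the only vertices of $L-S$ adjacent to $S$ are $a_1,\dots,a_\ell$, so the cut edge is forced to be an interface edge. Here I would argue from the geometry: each interior interface vertex of $L-S$ has a \emph{unique} neighbour off the interface line, its outward neighbour lying in $S$ and hence unavailable to $P$; thus any such vertex that is internal to $P$ must spend at least one of its two path-edges along the interface. Since the interface carries $\ell\geq 3$ vertices while at most two can be the endpoints $s,t$, this almost always forces an interface edge into $P$; the only delicate configuration is the smallest one, where $s$ and $t$ are themselves interface vertices and $\ell$ is minimal. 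I would dispatch that case not by appealing to an arbitrary $P$ but by \emph{choosing} $P$: since $L-S$ is an acceptable rectangle it has, by the algorithm of \cite{CST:AFAFCHPIM}, a Hamiltonian path that can be taken to run along the interface side, thereby supplying the required edge. Verifying that this construction is compatible with the prescribed $s,t$ and that $C$ can always be routed through the matching edge $(a_i',a_{i+1}')$ are the remaining routine checks, and they are the real content of the proof.
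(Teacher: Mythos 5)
Your proposal is correct and follows essentially the same route as the paper's proof: take a Hamiltonian cycle of the even-sized strip $S$ (Lemma~\ref{Lemma:1}), locate an edge of the Hamiltonian path of $L-S$ lying on the boundary facing $S$, and splice the cycle into the path across that edge. The only real difference is that you supply a justification for the existence of such an interface edge (the degree argument at interface vertices plus the count $\ell\geq 3$), a step the paper simply asserts and delegates to Figure~\ref{c}(a).
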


\begin{proof}
Assume that $L-S$ has a Hamiltonian path $H$. $S$ is an even-sized
rectangular grid graph and it has Hamiltonian cycle by Lemma
\ref{Lemma:1}. There exists an edge $ab\in H$ such that $ab$ is on
the boundary of $L-S$ facing $S$. A Hamiltonian path for
$(L(m,n),s,t)$ can be obtained by merging $H$ and the Hamiltonian
cycle of $S$ as shown in Figure \ref{c}(a).
\end{proof}
Let $(R_p, R_q)$ be a separation of $L(m,n)$. If $s$ and $t$ are in
different partitions, then we consider two vertices $p$ and $q$ such
that $s,p \in R_p$, $q,t \in R_q$ and $(R_p, R_q)$ are acceptable.
Therefore, a Hamiltonian path for $(L(m,n),s,t)$ can be obtained by
connecting two vertices $p$ and $q$ as shown in Figure \ref{d}(a).

\begin{figure}[thb]
  \centering
  \includegraphics[scale=1]{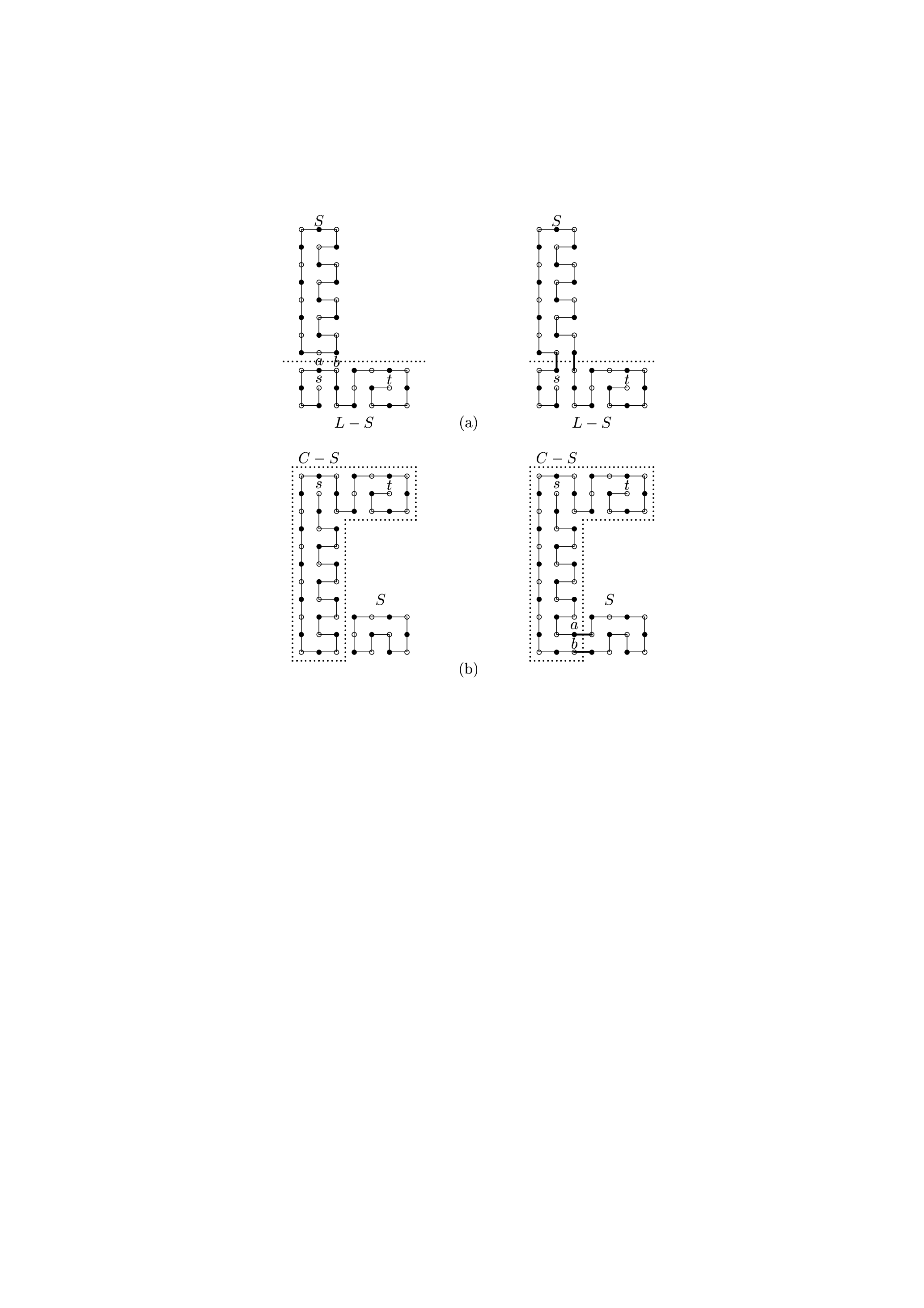}
  \caption[]%
  {\small  (a) A strip of $L(3,3)$, (b) A strip of $C(3,3)$.}
  \label{c}
\end{figure}
\begin{figure}[thb]
  \centering
  \includegraphics[scale=1]{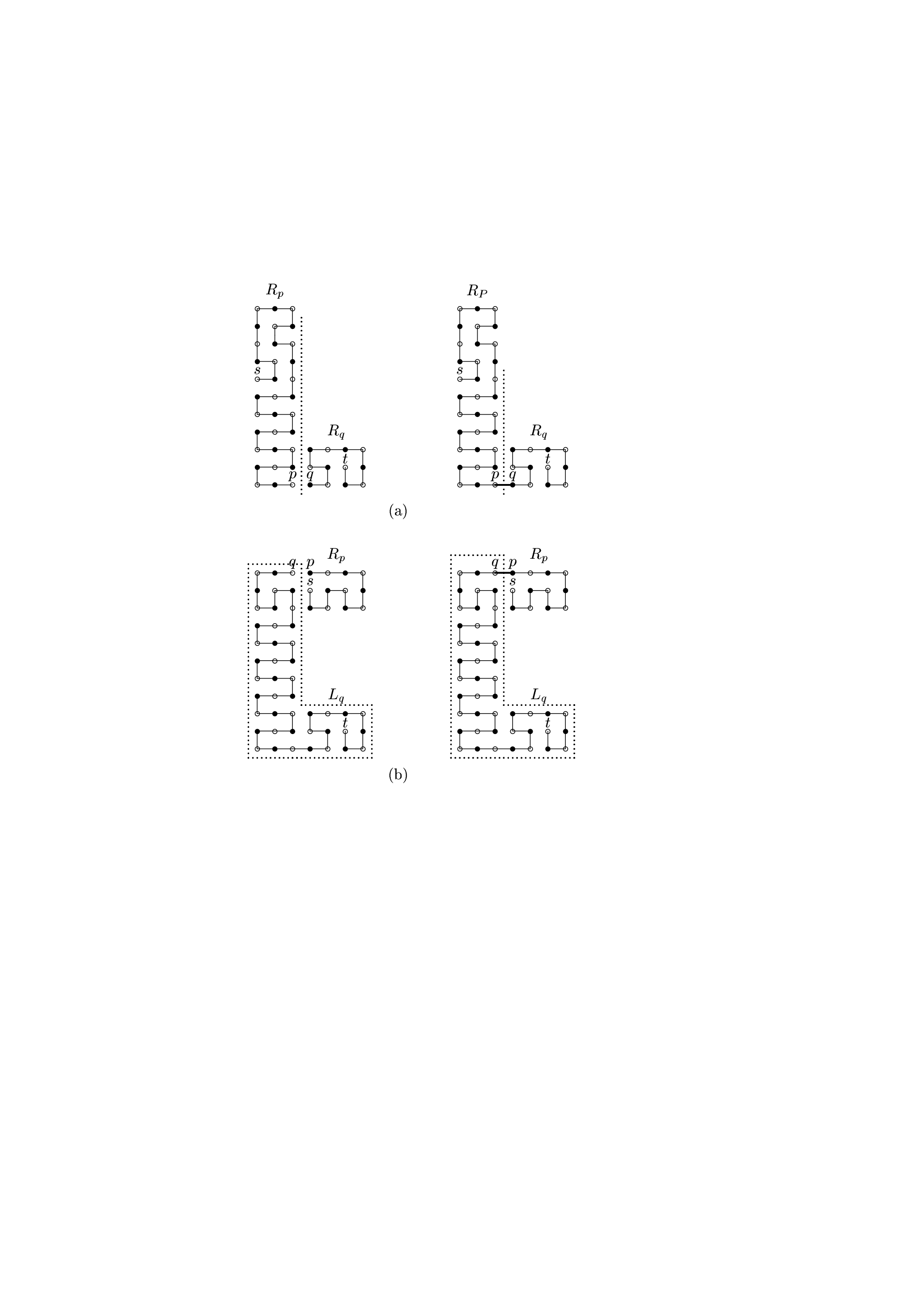}
  \caption[]%
  {\small (a) A split of $L(3,3)$, (b) A split of $C(3,3)$.}
  \label{d}
\end{figure}
\begin{lem}
\label{Lemma:10} Let $P(C(m,n),s,t)$ be an acceptable Hamiltonian
path problem, and $S$ strips it. If $C-S$ has a Hamiltonian path
between $s$ and $t$, then $(C(m,n),s,t)$ has a Hamiltonian path
between $s$ and $t$.
\end{lem}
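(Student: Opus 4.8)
The plan is to imitate the argument of Lemma \ref{Lemma:9} almost verbatim, since Lemma \ref{Lemma:10} is its exact $C$-alphabet counterpart: the only structural difference is that here $C-S$ may itself be an $L$-alphabet graph (or a rectangular graph), rather than a rectangular graph, but this plays no role in the merging step. The core idea is the standard surgery that splices a Hamiltonian cycle into a Hamiltonian path across a shared boundary edge. First I would unpack the hypothesis that $S$ \emph{strips} $P(C(m,n),s,t)$: by definition $S$ is an even-sized rectangular subgraph, $(S, C-S)$ is a separation of $C$, the two endpoints $s,t$ both lie in $C-S$, and $C-S$ is acceptable. The acceptability of $C-S$ together with the assumption of the lemma guarantees we may take a fixed Hamiltonian path $H$ of $C-S$ joining $s$ and $t$.

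The next step is to produce a Hamiltonian cycle in the piece we are attaching. Since $S$ is even-sized and rectangular with both side lengths at least $2$ (it is a nontrivial strip), Lemma \ref{Lemma:1} supplies a Hamiltonian cycle $\mathcal{C}_S$ of $S$. Now I would locate an edge $ab \in H$ lying on the boundary of $C-S$ that faces $S$; that is, $a$ and $b$ are consecutive vertices of $H$, each adjacent in the grid to a vertex of $S$, say to $a'$ and $b'$ respectively, with $a'b'$ an edge of $S$. The merge then deletes $ab$ from $H$ and deletes the parallel rung $a'b'$ from $\mathcal{C}_S$, and reconnects by adding the two crossing edges $aa'$ and $bb'$. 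The result is a single path that still runs from $s$ to $t$, now visiting every vertex of $S$ as well, hence a Hamiltonian path of $C(m,n)$ between $s$ and $t$, exactly as illustrated in Figure \ref{c}(b).

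The step I expect to be the main obstacle is justifying, rather than merely asserting, the existence of a usable boundary edge $ab$ together with a Hamiltonian cycle $\mathcal{C}_S$ that actually contains the matching rung $a'b'$. The path $H$ need not a priori traverse the shared boundary by an edge, so one must argue that along the common frontier of $C-S$ and $S$ there is at least one position where two consecutive path-vertices $a,b$ sit over grid-adjacent vertices $a',b'$ of $S$. Once such a pair is fixed, the remaining freedom lies in choosing $\mathcal{C}_S$: because Hamiltonian cycles of an even-sized rectangle can be routed so as to include any prescribed boundary edge, one can always arrange $a'b' \in \mathcal{C}_S$, after which the splice is valid. I would therefore spend the bulk of the write-up on this existence claim and treat the final reconnection as routine, pointing to Figure \ref{c}(b) for the concrete picture; this completes the proof of Lemma \ref{Lemma:10}.
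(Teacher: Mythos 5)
Your proposal takes essentially the same route as the paper: the paper's proof of Lemma \ref{Lemma:10} simply says it is ``similar to Lemma \ref{Lemma:9}'' (noting that $C-S$ is an $L$-alphabet graph) and the proof of Lemma \ref{Lemma:9} is exactly the splice you describe --- take the Hamiltonian path $H$ of the stripped graph, a Hamiltonian cycle of the even-sized rectangle $S$ from Lemma \ref{Lemma:1}, and merge them across a boundary edge $ab\in H$ facing $S$. In fact you are more careful than the paper, which merely asserts the existence of such an edge $ab$ and of a cycle containing the matching rung $a'b'$, whereas you correctly flag these as the points needing justification.
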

\begin{proof}
The proof is similar to Lemma \ref{Lemma:9}. Notice that $C-S$ is a
$L-$alphabet grid graph, see Figure \ref{c}(b).
\end{proof}
Let $(R_p, L_q)$ be a separation of $C(m,n)$. If $s$ and $t$ are in
different partitions, then we consider two vertices $p$ and $q$ such
that $s,p \in R_p$, $q,t \in L_q$ and $(R_p,L_q)$ are acceptable.
Therefore, a Hamiltonian path for $(L(m,n),s,t)$ can be obtained by
connecting two vertices $p$ and
$q$ as shown in Figure \ref{d}(b). \\
From Corollary \ref{Corollary:1} and Lemmas \ref{Lemma:3},
\ref{Lemma:4}, 3.3, 3.4 and 3.5, the following theorem holds.
\begin{thm}
\label{Theorem:2} Let $A(m,n)$ be a $L-$alphabet or $C-$alphabet
graph, and let $s$ and $t$ be two distinct vertices of it. $A(m,n)$
has a Hamiltonian path if and only if $P(A(m,n),s,t)$ is acceptable.
\end{thm}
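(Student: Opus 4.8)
The plan is to prove both implications by reducing everything to the rectangular characterization of Theorem~\ref{Theorem:1} together with the structural lemmas already established. I would settle the $L$-alphabet case first and then obtain the $C$-alphabet case through the separation of $C$ into an $L$-alphabet graph and a rectangle, using Lemma~\ref{Lemma:5} and Lemma~\ref{Lemma:10} to transfer both the obstruction and the construction across that separation.

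For the necessity direction (``only if'') I would argue by contrapositive: assume $P(A(m,n),s,t)$ is not acceptable and show that no Hamiltonian path exists. Acceptability can fail in two ways. If $s$ and $t$ are not color-compatible, then by Corollary~\ref{Corollary:1} there is no Hamiltonian path in $A$, since color-compatibility in the smallest enclosing rectangle is necessary. Otherwise $s,t$ are color-compatible but the sub-rectangle $R(2m-2,n)$ satisfies condition $(F3)$; here I would invoke Lemma~\ref{Lemma:4} (and Lemma~\ref{Lemma:5} for $C$), distinguishing according to whether $t$ lies in the right arm $R(2m-2,n)$ and whether $s$ does as well, exactly along the case analysis of that lemma. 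Combining the two possibilities shows that non-acceptability forces non-Hamiltonicity.

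For the sufficiency direction (``if'') I would give the constructive argument underlying the algorithm, based on the chosen separation of $A$. First, if $s$ and $t$ lie in the same part, I would take the complementary part as an even-sized rectangular strip $S$; by Lemma~\ref{Lemma:1} it carries a Hamiltonian cycle, the remaining piece $A-S$ is an acceptable rectangular problem (for $L$) or $L$-alphabet problem (for $C$) and hence admits a Hamiltonian path by Theorem~\ref{Theorem:1} (inductively in the $C$ case), and Lemma~\ref{Lemma:9} or Lemma~\ref{Lemma:10} merges that path with the cycle into a Hamiltonian path of $A$. Second, if $s$ and $t$ lie in different parts, I would use the split construction: choose intermediate vertices $p,q$, one in each part, so that both sub-problems $(R_p,s,p)$ and $(R_q,q,t)$ (respectively $(R_p,s,p)$ and $(L_q,q,t)$ for $C$) are acceptable, produce Hamiltonian paths in each by Theorem~\ref{Theorem:1}, and join them along the edge $pq$.

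The main obstacle I expect lies in the sufficiency direction: guaranteeing that the pieces produced by stripping or splitting are themselves acceptable rectangular problems, i.e.\ that the chosen cut location and the chosen connecting vertices $p,q$ never create a piece falling under $(F1)$, $(F2)$ or $(F3)$. This requires a careful case analysis on the colors and positions of $s$ and $t$ relative to the two arms, in particular handling the degenerate small boards, the $3$-rectangle configurations, and the boundary columns where the two arms meet, and choosing the separation and the vertices $p,q$ adaptively in each case. The $C$-alphabet result then follows by the same scheme with one additional layer, reducing its inner $L$-alphabet subproblem to the already-settled $L$-case.
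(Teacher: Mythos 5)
Your proposal follows essentially the same route as the paper, which deduces the theorem directly from Corollary~\ref{Corollary:1} and Lemmas~\ref{Lemma:3}, \ref{Lemma:4}, \ref{Lemma:5}, \ref{Lemma:9} and \ref{Lemma:10} (necessity from color-compatibility and the $(F3)$ obstruction lemmas, sufficiency from the strip and split constructions); indeed you give more detail than the paper, whose proof is a one-line citation of those results. The case analysis you flag as the main obstacle --- verifying that the stripped or split sub-problems are themselves acceptable --- is likewise left implicit in the paper.
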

Theorem \ref{Theorem:2} provides the necessary and sufficient
conditions for the existence of Hamiltonian paths in $L-$alphabet
and $C-$alphabet grid graphs.
\begin{thm} \label{theorem:5}In $L-$alphabet and $C-$alphabet grid graphs,
a Hamiltonian path between any two vertices $s$ and $t$ can be found
in linear time.
\end{thm}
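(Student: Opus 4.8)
The plan is to turn the constructive argument underlying Theorem~\ref{Theorem:2} into an explicit procedure and to bound its running time. By Theorem~\ref{Theorem:2} a Hamiltonian path between $s$ and $t$ exists precisely when $P(A(m,n),s,t)$ is acceptable, so the algorithm first tests acceptability: it checks color-compatibility of $s$ and $t$ and verifies that the relevant sub-rectangle ($R(2m-2,n)$ in the $L$-case, and the corresponding rectangle inside the $L$-part in the $C$-case) does not satisfy condition $(F3)$. Each of these tests inspects only the coordinates and colors of $s$, $t$, and the fixed corners of the pieces, so it runs in constant time (and at worst in time linear in the number of vertices if one prefers to scan the whole input). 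If the problem is not acceptable, the algorithm reports that no Hamiltonian path exists.

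First I would reduce the alphabet graph to a constant number of rectangular pieces. An $L$-alphabet graph admits a separation into two rectangular grid graphs, and a $C$-alphabet graph separates into an $L$-alphabet graph and a rectangle, the former separating further into two rectangles; thus in every case $A(m,n)$ decomposes into at most three rectangular subgraphs. Computing such a separation requires only determining, in constant time, which piece contains each of $s$ and $t$. On each rectangular piece the algorithm invokes the linear-time Hamiltonian-path procedure of \cite{CST:AFAFCHPIM}; since there is a fixed number of pieces whose sizes sum to $|V(A)|$, the total cost of these calls is $O(|V(A)|)$.

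It then remains to merge the pieces in linear time. When $s$ and $t$ lie in the same piece, a stripping rectangle $S$ is used: by Lemma~\ref{Lemma:9} (respectively Lemma~\ref{Lemma:10}) the Hamiltonian path of $A-S$ is spliced with the Hamiltonian cycle of $S$ guaranteed by Lemma~\ref{Lemma:1}, which only requires locating one boundary edge $ab$ of the path facing $S$ and rerouting it through the cycle, an $O(|V(S)|)$ operation. When $s$ and $t$ lie in different pieces, the split described after Lemma~\ref{Lemma:9} and Lemma~\ref{Lemma:10} is applied: auxiliary vertices $p$ and $q$ on the shared boundary are chosen so that both sub-problems are acceptable, the two Hamiltonian paths are computed separately, and they are joined by the edge $pq$ in constant time. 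Summing the constant number of linear-time subroutine calls and merges yields an overall $O(|V(A)|)$ bound, which is linear in the size of the input graph.

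The main obstacle I anticipate is the bookkeeping of the split case: one must verify that suitable auxiliary vertices $p,q$ can always be selected so that both resulting sub-problems are acceptable, guaranteeing that \cite{CST:AFAFCHPIM} succeeds on each, and that locating them, together with the boundary edge used in the strip merge, stays within the linear budget rather than requiring an exhaustive search. Once the finite case analysis over the positions of $s$ and $t$ relative to the separation is organized, each case contributes only constant overhead on top of the linear subroutine, so the linear-time claim follows.
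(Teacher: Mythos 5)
Your proposal is correct and follows essentially the same route as the paper: the paper's own proof is a two-sentence sketch that divides $A(m,n)$ into two or three rectangular grid graphs in $O(1)$, solves each subproblem with the linear-time algorithm of \cite{CST:AFAFCHPIM}, and merges the results in constant time. Your version simply fills in the details (acceptability test, strip versus split cases, and the cost of locating the boundary edge) that the paper leaves implicit.
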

\begin{proof}
We divide the problem into two (or three) rectangular grid graphs in
O$(1)$. Then we solve the subproblems in linear-time and merge the
results in O(1) using the method proposed in [1].
\end{proof}

\section{Conclusion and future work}
\label{ConclusionSect} We presented a linear-time algorithm for
finding a Hamiltonian path in $L-$alphabet and $C-$alphabet grid
graphs between any two given vertices. Since the Hamiltonian path
problem is NP-complete in general grid graphs, it remains open if
the problem is polynomially solvable in solid grid graphs.

\end{document}